\theoremstyle{plain}
\newtheorem{thm}{\protect\theoremname}
\theoremstyle{plain}
\newtheorem{prop}[thm]{\protect\propositionname}
\newenvironment{proof}[1][\protect\proofname]{\par
	\normalfont\topsep6\p@\@plus6\p@\relax
	\trivlist
	\itemindent\parindent
	\item[\hskip\labelsep\scshape #1]\ignorespaces
}{%
	\endtrivlist\@endpefalse
}
\providecommand{\proofname}{Proof}
\providecommand{\propositionname}{Proposition}
\providecommand{\theoremname}{Theorem}
\begin{document}
\title{Multipartite entanglement theory with entanglement-nonincreasing operations}
\author{Alexander Streltsov}
\email{streltsov.physics@gmail.com}

\affiliation{Institute of Fundamental Technological Research, ~\\
Polish Academy of Sciences, Pawi\'{n}skiego 5B, 02-106 Warsaw, Poland}
\affiliation{Centre for Quantum Optical Technologies, Centre of New Technologies,
University of Warsaw, Banacha 2c, 02-097 Warsaw, Poland}
\begin{abstract}
A key problem in quantum information science is to determine optimal
protocols for the interconversion of entangled states shared between
remote parties. While for two parties a large number of results in
this direction is available, the multipartite setting still remains
a major challenge. In this article, this problem is addressed by extending
the resource theory of entanglement for multipartite systems beyond
the standard framework of local operations and classical communication.
Specifically, we consider transformations capable of introducing a
small, controllable increase of entanglement of a state, with the
requirement that the increase can be made arbitrarily small. We demonstrate
that in this adjusted framework, the transformation rates between
multipartite states are fundamentally dictated by the bipartite entanglement
entropies of the respective quantum states. Remarkably, this approach
allows the reduction of tripartite entanglement to its bipartite analog,
indicating that every pure tripartite state can be reversibly synthesized
from a suitable number of singlets distributed between pairs of parties. 
\end{abstract}
\maketitle

\section{Introduction}

Quantum entanglement, one of the most intriguing and fundamental phenomena
in quantum mechanics, has been extensively studied for its potential
to revolutionize our understanding of the physical world and to bring
forth a new era of technological advancements. When two distant parties
share an entangled quantum state, they acquire the ability to execute
certain tasks that would be unattainable without this type of correlations~\citep{RevModPhys.81.865}.
This unique feature of quantum entanglement dramatically expands the
operational capabilities of remote parties, enabling phenomena such
as quantum teleportation~\citep{PhysRevLett.70.1895} and quantum
key distribution~\citep{PhysRevLett.67.661} that fundamentally transcend
the limitations of classical physics. 

While the capabilities and limitations associated with entanglement
in a two-party setup have been extensively studied~\citep{RevModPhys.81.865},
the emergence of large-scale quantum networks necessitates an understanding
of states entangled across multiple parties. The significance of such
multipartite entanglement is underscored by its role in various quantum
protocols such as multipartite remote state preparation~\citep{PhysRevA.105.042611}
and quantum secret sharing~\citep{PhysRevLett.83.648,PhysRevA.59.1829}.
In the latter protocol, a confidential message is disseminated among
several parties in a manner that requires their collective cooperation
for the retrieval of the message. To fully leverage the extensive
potential of multipartite entangled states, it is essential to gain
a comprehensive understanding of how these states can be manipulated. 

Local operations and classical communication (LOCC) present a fundamental
operational framework in the exploration of quantum entanglement~\citep{PhysRevA.54.3824,RevModPhys.81.865,Chitambar2014}.
At its core, LOCC involves two distinct classes of actions. The first
class, local operations, pertains to actions performed independently
on each subsystem of an entangled state. These actions encompass unitary
transformations, measurements, and the incorporation of ancillary
systems, which can be performed locally by the distant parties. The
second class, classical communication, enables the remote parties
to distribute the results of their local operations through classical
channels. In essence, LOCC can be considered the most comprehensive
class of protocols that can be executed by remote parties without
the need for exchanging quantum particles. Consequently, LOCC protocols
cannot create entanglement, thereby rendering any entangled state
a valuable resource within this framework. 

In the bipartite setting, our comprehension of the potential and limitations
inherent in LOCC is notably advanced, especially regarding transformations
involving pure states. Given any pair of pure states $\ket{\psi}$
and $\ket{\phi}$, shared between Alice and Bob, it is possible to
verify whether the transformation $\ket{\psi}\rightarrow\ket{\phi}$
is feasible under LOCC~\citep{PhysRevLett.83.436}. Furthermore,
in the asymptotic regime --- where many copies of $\ket{\psi}$ are
at our disposal --- we have precise knowledge of the transformation
rates, which are intricately tied to the entanglement entropies of
the involved quantum states~\citep{PhysRevA.53.2046}.

In contrast to the well-established findings in bipartite systems,
the multipartite setting presents a substantially more intricate landscape.
Even when considering pure states of a few qubits, the current understanding
is characterized by isolated results~\citep{PhysRevA.62.062314,PhysRevA.63.012307,PhysRevA.72.042325,PhysRevA.95.012323,PhysRevLett.122.120503,PhysRevLett.125.080502,PhysRevA.81.012317,10.1063/1.3481573,TAJIMA20131}.
Given two arbitrary four-qubit states $\ket{\psi}^{ABCD}$ and $\ket{\phi}^{ABCD}$
our current knowledge neither permits us to verify conclusively whether
$\ket{\psi}$ can be transformed into $\ket{\phi}$ via LOCC with
unit probability, nor allows us to determine the optimal asymptotic
transformation rate for such a conversion. Needless to say, this situation
presents a compelling challenge. 

Nevertheless, recent research in the field has revealed a promising
strategy: some problems in quantum information science can be effectively
addressed by carefully relaxing conventional constraints. Notable
instances include entanglement catalysis~\citep{PhysRevLett.83.3566},
where a meaningful relaxation of the standard restrictions has been
shown to considerably simplify the problem under investigation~\citep{PhysRevLett.127.150503,PhysRevLett.127.080502,PhysRevLett.129.120506,ganardi2023catalytic,lami2023catalysis,datta2022catalysis}.
Another intriguing example explores the overlap between theories of
entanglement and thermodynamics, with a particular focus on probing
the potential existence of a principle in entanglement theory that
parallels the second law of thermodynamics. This investigation essentially
hinges on the query of reversibility in transformations between entangled
states. In other words, it questions the feasibility of performing
lossless transformations between any given entangled state $\rho$
and another entangled state $\sigma$ when considered in an asymptotic
context. In the bipartite setting, the resource theory of entanglement
is inherently asymptotically irreversible under LOCC~\citep{PhysRevLett.86.5803}.
This irreversibility persists even when the LOCC framework is relaxed,
provided that the modified set of operations still remains incapable
of generating entanglement~\citep{Lami2023}. However, there are
indications that reversibility can be established when considering
broader classes of operations, e.g., those capable of generating small,
controllable amounts of entanglement~\citep{Brandao2008,Brandao2010,Brandao2010b,berta2023gap},
or involving ancillary particles acting as an entanglement battery~\citep{ganardi2024second}. 

In this article, we propose a relaxation of the LOCC framework for
the multipartite setting. Specifically, we explore transformations
that can increase the relative entropy of entanglement of any state
by a small amount $\varepsilon$, with the requirement that $\varepsilon$
can be made arbitrarily small. This relaxation of the LOCC framework
uncovers an interesting pattern; the transformation rates between
multipartite states are intrinsically governed by the bipartite entanglement
entropies of the quantum states involved. Implications of these findings
are also discussed.

\section{\protect\label{sec:Preliminaries}Preliminaries}

Here, we introduce the notation and definitions used throughout this
article. In general, a state of a quantum system is described by a
density matrix $\rho$, i.e., a positive semi-definite matrix acting
on a Hilbert space $\mathcal{H}$. An important quantity in quantum
information theory is the von Neumann entropy, which is defined as
\begin{equation}
S(\rho)=-\mathrm{Tr}(\rho\log_{2}\rho).
\end{equation}
For two quantum states $\rho$ and $\sigma$ on the same Hilbert space,
the distance between the states can be quantified as $||\rho-\sigma||_{1}$
with the trace norm 
\begin{equation}
||M||_{1}=\mathrm{Tr}\sqrt{M^{\dagger}M}.
\end{equation}
It holds that $||\rho-\sigma||_{1}\geq0$ with equality if and only
if $\rho=\sigma$. Other useful quantities in this context are the
fidelity~\citep{UHLMANN1976273,doi:10.1080/09500349414552171}
\begin{equation}
F(\rho,\sigma)=\left(\mathrm{Tr}\sqrt{\sqrt{\rho}\sigma\sqrt{\rho}}\right)^{2},
\end{equation}
 the Bures distance 
\begin{equation}
D_{B}\left(\rho,\sigma\right)=\sqrt{2-2\sqrt{F(\rho,\sigma)}},
\end{equation}
and the quantum relative entropy~\citep{Umegaki}
\begin{equation}
S(\rho||\sigma)=\mathrm{Tr}(\rho\log_{2}\rho)-\mathrm{Tr}(\rho\log_{2}\sigma).
\end{equation}
It holds that $D_{B}(\rho,\sigma)\geq0$ and $S(\rho||\sigma)\geq0$,
with equality in both cases if and only if $\rho=\sigma$. Moreover,
the Bures distance fulfills the triangle inequality, i.e., for any
three quantum states $\rho$, $\sigma$, and $\tau$ it holds that
\begin{equation}
D_{B}\left(\rho,\tau\right)\leq D_{B}\left(\rho,\sigma\right)+D_{B}\left(\sigma,\tau\right).
\end{equation}

For two parties, Alice and Bob, the joint Hilbert space is defined
as a tensor product of the individual Hilbert spaces: $\mathcal{H}^{AB}=\mathcal{H}^{A}\otimes\mathcal{H}^{B}$.
For a quantum state $\rho^{AB}$ on the total Hilbert space $\mathcal{H}^{AB}$,
the reduced state on Alice's part is defined as $\rho^{A}=\mathrm{Tr}_{B}[\rho^{AB}]$,
where $\mathrm{Tr}_{B}$ denotes the partial trace. A quantum state
$\rho^{AB}$ is called separable~\citep{PhysRevA.40.4277,RevModPhys.81.865}
if it can be written as 
\[
\rho^{AB}=\sum_{i}p_{i}\rho_{i}^{A}\otimes\rho_{i}^{B}
\]
with probabilities $p_{i}$ and quantum states $\rho_{i}^{A}$ and
$\rho_{i}^{B}$ on $\mathcal{H}^{A}$ and $\mathcal{H}^{B}$, respectively.
Any state which is not separable is called entangled~\citep{PhysRevA.40.4277,RevModPhys.81.865}.
This definition can be extended to more than $2$ parties in a straightforward
way. A tripartite state $\rho^{ABC}$ is called fully separable if
it can be written as $\rho^{ABC}=\sum_{i}p_{i}\rho_{i}^{A}\otimes\rho_{i}^{B}\otimes\rho_{i}^{C}$,
and similarly for any number of parties $n$.

There are many possible ways to quantify the amount of entanglement
in a quantum state~\citep{RevModPhys.81.865}. In this article we
will use the generalized robustness of entanglement $R_{\mathrm{g}}$
and the relative entropy of entanglement $E_{\mathrm{r}}$ defined
as~\citep{PhysRevA.59.141,PhysRevA.67.054305,PhysRevA.68.012308,PhysRevLett.78.2275}
\begin{align}
R_{\mathrm{g}}(\rho) & =\min_{\sigma}\left\{ s\geq0:\frac{\rho+s\sigma}{1+s}\in\mathcal{S}\right\} ,\\
E_{\mathrm{r}}(\rho) & =\min_{\sigma_{s}\in\mathcal{S}}S(\rho||\sigma_{s}),
\end{align}
where $\mathcal{S}$ is the set of bipartite separable states. A closely
related entanglement quantifier which will also be used in the following
is the regularized relative entropy of entanglement 
\begin{equation}
E_{\infty}(\rho)=\lim_{n\rightarrow\infty}\frac{1}{n}E_{\mathrm{r}}(\rho^{\otimes n}).
\end{equation}

A quantum operation describes the most general transformation that
a quantum system can undergo. Quantum operations correspond to completely
positive trace preserving maps $\Lambda[\rho]=\sum_{i}K_{i}\rho K_{i}^{\dagger}$with
Kraus operators $K_{i}$ having the property $\sum_{i}K_{i}^{\dagger}K_{i}=\openone$.
An important class of operations in the bipartite setting is known
as local operations and classical communication (LOCC)~\citep{PhysRevA.54.3824,Chitambar2014}.
As mentioned in the introduction, these are transformations which
can be implemented by local actions on each of the systems and a classical
communication channel. Any entanglement $E$ measure does not increase
under LOCC~\citep{plenio2006introductionentanglementmeasures,RevModPhys.81.865},
i.e., 
\begin{equation}
E(\Lambda[\rho])\leq E(\rho)
\end{equation}
for any LOCC protocol $\Lambda$. This also applies for $E_{\mathrm{g}}$
and $E_{\mathrm{r}}$ defined above. Another important property of
$E_{\mathrm{r}}$ which we will use in this article is its behavior
on states of the form $\sum_{i}p_{i}\rho_{i}^{AB}\otimes\ket{i}\!\bra{i}^{A'}$,
where the particle $A'$ is in Alice's lab. In particular, it holds~\citep{Horodecki2005}
\begin{equation}
E_{\mathrm{r}}\left(\sum_{i}p_{i}\rho_{i}^{AB}\otimes\ket{i}\!\bra{i}^{A'}\right)=\sum_{i}p_{i}E_{\mathrm{r}}\left(\rho_{i}^{AB}\right).\label{eq:flags}
\end{equation}

\section{Asymptotically entanglement-nonincreasing operations}

A key question since the early days of entanglement theory is to describe
all state transformations possible within the LOCC setting. In the
asymptotic setup, a transformation $\rho\rightarrow\sigma$ can be
achieved with rate $r$ if for any $\varepsilon>0$ there exists some
$n$ and an LOCC protocol $\Lambda_{\mathrm{LOCC}}$ such that 
\begin{equation}
\left\Vert \Lambda_{\mathrm{LOCC}}\left[\rho^{\otimes n}\right]-\sigma^{\otimes\left\lfloor rn\right\rfloor }\right\Vert _{1}<\varepsilon.
\end{equation}
The supremum of such feasible rates $r$ is known as the transformation
rate $R(\rho\rightarrow\sigma)$. 

As previously stated, for pure bipartite states $\ket{\psi}^{AB}$
and $\ket{\phi}^{AB}$, the transformation rate is closely related
to the entanglement entropies of the quantum states involved. Specifically,
the transformation rate from $\ket{\psi}^{AB}$ to $\ket{\phi}^{AB}$
is given by the ratio of their entanglement entropies: $R(\psi^{AB}\rightarrow\phi^{AB})=E(\psi^{AB})/E(\phi^{AB})$~\citep{PhysRevA.53.2046}.
Here, the entanglement entropy $E$ is defined as the von Neumann
entropy of the reduced state, i.e., 
\begin{equation}
E(\psi^{AB})=S(\psi^{A})=-\mathrm{Tr}(\psi^{A}\log_{2}\psi^{A}).
\end{equation}
When extending this scenario to multipartite pure states, it becomes
evident that the bipartite entanglement entropies serve as upper bounds
for the transformation rates via multipartite LOCC~\citep{PhysRevLett.125.080502}:
\begin{equation}
R(\psi\rightarrow\phi)\leq\min_{T}\frac{E^{T|\overline{T}}(\psi)}{E^{T|\overline{T}}(\phi)}=\min_{T}\frac{S(\psi^{T})}{S(\phi^{T})},
\end{equation}
where the minimum is taken over all subsets $T$ of all the parties,
and $\overline{T}$ is the complement of $T$. As we will see in the
following, this inequality can be turned into an equality, if the
set of LOCC operations is extended accordingly.

To extend the set of LOCC operations, we will consider transformations
that allow the injection of a small, controllable amount of entanglement
into the system, similar to the approach followed in~\citep{Brandao2008,Brandao2010,Brandao2010b}.
While these transformations may be more challenging to implement compared
to LOCC, they provide valuable insights into the limits of entanglement
manipulation. A guiding principle in our framework is to preserve
the resource-like nature of entanglement. To achieve this, our extension
ensures that the asymptotic transformation rates between any two entangled
states, $\rho$ and $\sigma$, remain bounded. In the multipartite
case, this implies that every multipartite entangled state allows
for a finite rate of singlet extraction between any pair of parties.
This suggests that our extension of the LOCC paradigm yields a theory
where singlets maintain their fundamental role as valuable resource
states. 

\begin{figure}
\includegraphics[width=0.5\columnwidth]{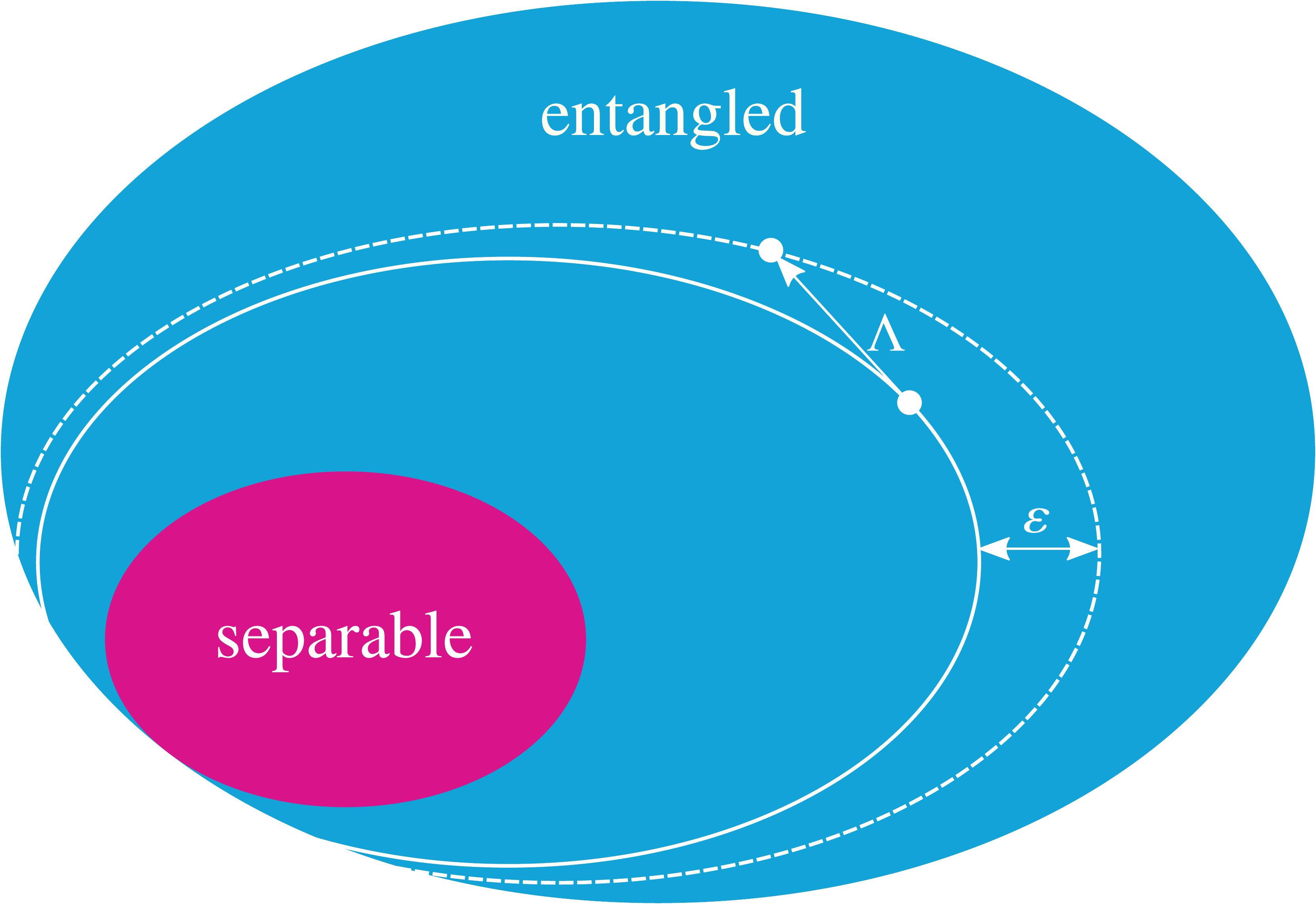}

\caption{\protect\label{fig:1}An operation $\Lambda$ is called $\varepsilon$-entanglement-nonincreasing
if it can increase the relative entropy of entanglement of any state
by at most $\varepsilon$. The figure shows states with a constant
relative entropy of entanglement $E_{\mathrm{r}}=c$ (solid curve),
and states with $E_{\mathrm{r}}=c+\varepsilon$ (dashed curve). An
$\varepsilon$-entanglement-nonincreasing operation cannot transform
states inside the solid boundary outside of the dashed boundary.}

\end{figure}
We say that a quantum operation $\Lambda$ on a bipartite Hilbert
space $\mathcal{H}^{AB}$ is \emph{$\varepsilon$-entanglement-nonincreasing}
if 
\begin{equation}
E_{\mathrm{r}}(\Lambda[\rho])-E_{\mathrm{r}}(\rho)\leq\varepsilon\label{eq:epsilon}
\end{equation}
for all states $\rho$ acting on $\mathcal{H}^{AB}$, see also Fig.~\ref{fig:1}.
A sequence of operations $\{\Lambda_{n}\}$ is called \emph{asymptotically
entanglement-nonincreasing (AEN)} if for any $\varepsilon>0$ the
operations $\Lambda_{n}$ are $\varepsilon$-entanglement-nonincreasing
for all $n$ large enough. We further say that $\rho^{AB}$ can be
converted into $\sigma^{AB}$ with rate $r$ via AEN operations if
for any $\delta>0$ there exists a monotonic infinite integer sequence
$\{k_{n}\}$ and an AEN sequence $\{\Lambda_{k}\}$ such that
\begin{equation}
\frac{1}{2}\left\Vert \Lambda_{k_{n}}\left[\rho^{\otimes k_{n}}\right]-\sigma^{\otimes\left\lfloor rk_{n}\right\rfloor }\right\Vert _{1}<\delta\label{eq:AENrate}
\end{equation}
holds true for all $n$ large enough. Here, $\Lambda_{k_{n}}$ is
a quantum operation with the input Hilbert space being $k_{n}$ copies
of $\mathcal{H}^{AB}$, and the output Hilbert space is $\left\lfloor rk_{n}\right\rfloor $
copies of $\mathcal{H}^{AB}$. The supremum of all rates achievable
in this setup will be denoted by $R_{\mathrm{AEN}}(\rho\rightarrow\sigma)$.
It is clear that 
\begin{equation}
R_{\mathrm{AEN}}(\rho\rightarrow\sigma)\geq R(\rho\rightarrow\sigma).
\end{equation}
As we show in the following proposition, $R_{\mathrm{AEN}}$ can be
upper bounded in terms of the regularized relative entropy of entanglement. 
\begin{prop}
For any two bipartite states $\rho$ and $\sigma$ the conversion
rate via asymptotically entanglement-nonincreasing operations is bounded
as
\begin{equation}
R_{\mathrm{AEN}}(\rho\rightarrow\sigma)\leq\frac{E_{\infty}\left(\rho\right)}{E_{\infty}\left(\sigma\right)}.\label{eq:UpperBound}
\end{equation}
\end{prop}
\begin{proof}
Let $r$ be a feasible rate for the conversion $\rho\rightarrow\sigma$,
i.e., for any $\delta>0$ there exists an integer sequence $\{k_{n}\}$
and an AEN sequence $\{\Lambda_{k}\}$ such that Eq.~(\ref{eq:AENrate})
holds for all $n$ large enough. Using asymptotic continuity of the
relative entropy of entanglement~\citep{DONALD1999257,Winter2016}
we obtain 
\begin{equation}
E_{\mathrm{r}}\left(\sigma^{\otimes\left\lfloor rk_{n}\right\rfloor }\right)-E_{\mathrm{r}}\left(\Lambda_{k_{n}}\left[\rho^{\otimes k_{n}}\right]\right)\leq\delta\log_{2}d^{rk_{n}}+(1+\delta)h\left(\frac{\delta}{1+\delta}\right),
\end{equation}
where $d=d_{A}=d_{B}$ denotes the local dimension of $\rho=\rho^{AB}$
and $\sigma=\sigma^{AB}$. Using the fact that the sequence $\{\Lambda_{k}\}$
is AEN we further have 
\begin{equation}
E_{\mathrm{r}}\left(\sigma^{\otimes\left\lfloor rk_{n}\right\rfloor }\right)\leq E_{\mathrm{r}}\left(\rho^{\otimes k_{n}}\right)+\varepsilon+\delta\log_{2}d^{rk_{n}}+(1+\delta)h\left(\frac{\delta}{1+\delta}\right)
\end{equation}
for all $n$ large enough. Dividing both sides of the inequality by
$k_{n}$ and taking the limit $n\rightarrow\infty$ we find 
\begin{equation}
rE_{\infty}\left(\sigma\right)\leq E_{\infty}\left(\rho\right)+\delta r\log_{2}d.
\end{equation}
Since we can choose arbitrary $\delta>0$, it follows that $r\leq E_{\infty}\left(\rho\right)/E_{\infty}\left(\sigma\right)$
and the proof is complete.
\end{proof}
An important implication of this result is that the AEN setting ensures
a bounded singlet distillation rate, as $R_{\mathrm{AEN}}(\rho\rightarrow\psi^{-})\leq E_{\infty}(\rho)$
for the singlet state $\ket{\psi^{-}}=(\ket{01}-\ket{10})/\sqrt{2}$. 

It is useful to compare the class of operations discussed above with
those introduced in~\citep{Brandao2008}, where the authors examined
$\varepsilon$-non-entangling operations in the bipartite setting.
These are operations $\Lambda$ that can transform a separable state
$\rho_{s}$ into an entangled state such that $R_{\mathrm{g}}(\Lambda[\rho_{s}])\leq\varepsilon$,
where $R_{\mathrm{g}}$ denotes the generalized robustness of entanglement
which has been defined in Section~\ref{sec:Preliminaries}. The authors
of~\citep{Brandao2008} also defined a sequence of operations $\{\Lambda_{n}\}$
to be asymptotically non-entangling if each $\Lambda_{n}$ is $\varepsilon_{n}$-non-entangling
and $\lim_{n\rightarrow\infty}\varepsilon_{n}=0$. We observe that
any $\varepsilon$-non-entangling operation is also $\varepsilon$-entanglement-nonincreasing.
Furthermore, a sequence of operations that is asymptotically non-entangling
is also AEN. However, it remains an open question whether these two
notions coincide, i.e., whether every AEN sequence is also asymptotically
non-entangling. 

\section{AEN in multipartite systems}

For multipartite systems, we will consider operations which are $\varepsilon$-entanglement-nonincreasing
in any bipartition. As an example, for three parties, $A$, $B$,
and $C$, we require that Eq.~(\ref{eq:epsilon}) is fulfilled for
the bipartite relative entropy of entanglement in all cuts $A|BC$,
$B|AC$, and $C|AB$. This definition is then extended to more than
3 parties in a straightforward way, i.e., it is required that
\begin{equation}
E_{\mathrm{r}}^{T|\overline{T}}(\Lambda[\rho])-E_{\mathrm{r}}^{T|\overline{T}}(\rho)\leq\varepsilon
\end{equation}
holds true for any bipartition $T|\overline{T}$ of the total multipartite
system. Correspondingly, a sequence of multipartite operations $\{\Lambda_{n}\}$
will be called asymptotically entanglement-nonincreasing, if it is
AEN in any bipartition. 

Asymptotic conversion rates via multipartite AEN are then defined
analogously to the bipartite setting. From Eq.~(\ref{eq:UpperBound})
it is clear that the optimal rate is not larger than 
\begin{equation}
R_{\mathrm{AEN}}(\rho\rightarrow\sigma)\leq\min_{T}\frac{E_{\infty}^{T|\overline{T}}(\rho)}{E_{\infty}^{T|\overline{T}}(\sigma)}.\label{eq:UpperBoundMultipartite}
\end{equation}
This means that multipartite AEN operations lead to bounded distillation
rates, when it comes to distilling singlets between any pair of parties.
Since there exist multipartite entangled states which are separable
in all bipartitions~\citep{DiVincenzo2003}, AEN operations can in
principle create large amounts of multipartite entanglement. Nevertheless,
Eq.~(\ref{eq:UpperBoundMultipartite}) ensures that the theory obtained
in this way does not trivialize, preserving the valuable role of entanglement
as a resource.

Equipped with these tools, we are ready to present the main result
of this article. 
\begin{thm}
\label{thm:Main}For any two multipartite pure states $\ket{\psi}$
and $\ket{\phi}$ the optimal conversion rate via asymptotically entanglement-nonincreasing
operations is given by
\begin{equation}
R_{\mathrm{AEN}}(\psi\rightarrow\phi)=\min_{T}\frac{E^{T|\overline{T}}(\psi)}{E^{T|\overline{T}}(\phi)}=\min_{T}\frac{S(\psi^{T})}{S(\phi^{T})}.
\end{equation}
\end{thm}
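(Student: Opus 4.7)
The upper bound $R_{\mathrm{AEN}}(\psi\to\phi)\le\min_{T}S(\psi^{T})/S(\phi^{T})$ is immediate from Eq.~(\ref{eq:UpperBoundMultipartite}) combined with the identity $E_{\infty}^{T|\overline{T}}(\ket{\psi}\bra{\psi})=S(\psi^{T})$ for bipartite pure states, so the real content of the theorem is achievability of the rate $r^{\star}=\min_{T}S(\psi^{T})/S(\phi^{T})$.

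My plan is to route the conversion through a \emph{singlet standard form}. For pure tripartite $\ket{\psi}^{ABC}$ with marginal entropies $a=S(\psi^{A})$, $b=S(\psi^{B})$, $c=S(\psi^{C})$, define
\[
\Xi(\psi)=\bigl(\ket{\psi^{-}}^{AB}\bigr)^{\otimes n_{AB}}\otimes\bigl(\ket{\psi^{-}}^{AC}\bigr)^{\otimes n_{AC}}\otimes\bigl(\ket{\psi^{-}}^{BC}\bigr)^{\otimes n_{BC}},
\]
with $n_{AB}=(a+b-c)/2$ and its cyclic analogs, all non-negative by Araki--Lieb; by construction $\Xi(\psi)$ shares the three bipartite entropies of $\psi$. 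The central step is to establish an asymptotic AEN-reversibility $\psi^{\otimes n}\leftrightarrow\Xi(\psi)^{\otimes n}$. Granted this, the protocol reads (i)~$\psi^{\otimes n}\to\Xi(\psi)^{\otimes n}$ via AEN; (ii)~rearrange bipartite singlets by LOCC --- in particular by entanglement swapping, which converts one $AC$- and one $BC$-singlet into one $AB$-singlet --- to reach a target $\Xi(\phi)^{\otimes m}$; (iii)~$\Xi(\phi)^{\otimes m}\to\phi^{\otimes m}$ via the reverse AEN direction. The maximum feasible $m/n$ in step~(ii) is the optimum of a linear program in the swap amounts; its LP dual is indexed by $(\mu_{A},\mu_{B},\mu_{C})\ge0$ with extremal vertices on the coordinate axes, and evaluates to $\min(a/a',b/b',c/c')=r^{\star}$. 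For $k\ge4$ parties I would use the analogous canonical form built from bipartite singlets and, where the entropy constraints demand it, small GHZ-type resources between subsets, with an analogous LP giving the rate.

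The main technical obstacle is the reversibility $\psi^{\otimes n}\leftrightarrow\Xi(\psi)^{\otimes n}$ under AEN. Under strict LOCC this fails drastically: for the GHZ state, tracing out $C$ leaves a separable $AB$ marginal, so no $AB$-singlet is LOCC-distillable at positive rate. The AEN relaxation is precisely what bypasses this obstruction, since it permits an asymptotically vanishing generation of $E_{\mathrm{r}}$ in every bipartition, providing the slack needed to redistribute entanglement across cuts. Concretely, I would adapt the Brandao--Plenio-type argument underlying the bipartite identity $E_{\infty}=S$ for pure states, and verify that a single global operation $\Lambda_{n}$ can be constructed whose $\varepsilon$-AEN excess in \emph{every} cut $T|\overline{T}$ vanishes as $n\to\infty$. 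Maintaining this multi-cut AEN condition with one and the same operation, rather than cut-by-cut, is where I expect the bulk of the technical work to lie.
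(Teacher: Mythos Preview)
Your upper bound is fine; the problem is achievability.

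The step you yourself flag as the ``main technical obstacle'' --- establishing the AEN reversibility $\psi^{\otimes n}\leftrightarrow\Xi(\psi)^{\otimes n}$ --- is not a sub-problem of the theorem: it is an \emph{instance} of it, applied to the particular pair $(\psi,\Xi(\psi))$. Any construction that certifies $\psi\to\Xi(\psi)$ as AEN in every cut would work verbatim for $\psi\to\phi$, so the singlet detour, the swapping step, and the LP analysis add complexity without reducing the difficulty. Your proposal therefore leaves the actual content of the theorem as an unfilled ``I would adapt Brand\~ao--Plenio'' placeholder. Separately, for $k\ge4$ parties your canonical form generally does not exist: bipartite singlets have $\binom{k}{2}$ degrees of freedom but must match $2^{k-1}-1$ independent cut entropies, and the remark about adding ``GHZ-type resources'' is not a plan.

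The paper's proof is a single direct construction that bypasses all of this. Define the measure-and-prepare channel
\[
\Lambda_{n}^{\psi,\phi}[\rho]=\mathrm{Tr}\!\bigl[\ket{\psi}\!\bra{\psi}^{\otimes n}\rho\bigr]\,\ket{\phi}\!\bra{\phi}^{\otimes\lfloor rn\rfloor}\otimes\ket{0}\!\bra{0}^{K}
+\mathrm{Tr}\!\bigl[(\openone-\ket{\psi}\!\bra{\psi}^{\otimes n})\rho\bigr]\,\mu_{s}\otimes\ket{1}\!\bra{1}^{K},
\]
which tests for $\psi^{\otimes n}$ and outputs $\phi^{\otimes\lfloor rn\rfloor}$ on success, a separable state otherwise. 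The bipartite analysis in the Supplemental Material shows this sequence is AEN in a fixed cut $T|\overline{T}$ whenever $r<S(\psi^{T})/S(\phi^{T})$. Because it is the \emph{same} operation being analysed in each cut, choosing $r<\min_{T}S(\psi^{T})/S(\phi^{T})$ makes it AEN in all cuts at once. The ``multi-cut with one operation'' difficulty you anticipated does not arise --- the cut-by-cut estimates are independent and each uses only the entropies $S(\psi^{T}),S(\phi^{T})$ in that cut --- and the argument is uniform in the number of parties.
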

\begin{proof}
From Eq.~(\ref{eq:UpperBoundMultipartite}) we see that the maximal
rate cannot exceed $\min_{T}S(\psi^{T})/S(\phi^{T})$. We will now
present a sequence of AEN operations which achieves conversion at
this rate. Consider the sequence
\begin{align}
\Lambda_{n}^{\psi,\phi}\left[\rho\right] & =\mathrm{Tr}\left[\ket{\psi}\!\bra{\psi}^{\otimes n}\rho\right]\ket{\phi}\!\bra{\phi}^{\otimes\left\lfloor rn\right\rfloor }\otimes\ket{0}\!\bra{0}^{K}+\mathrm{Tr}\left[\left(\openone-\ket{\psi}\!\bra{\psi}^{\otimes n}\right)\rho\right]\mu_{s}\otimes\ket{1}\!\bra{1}^{K},
\end{align}
with some fully separable state $\mu_{s}$, and the register $K$
is in possession of Alice. When applied onto the state $\rho=\psi^{\otimes n}$,
we obtain
\begin{equation}
\Lambda_{n}^{\psi,\phi}\left[\psi^{\otimes n}\right]=\phi^{\otimes\left\lfloor rn\right\rfloor }\otimes\ket{0}\!\bra{0}^{K}.
\end{equation}
Tracing out the register $K$, we see that $\ket{\psi}$ can be converted
into $\ket{\phi}$ with rate $r$ in this setting. We show in Proposition~\ref{prop:Main}
that in the bipartite setting the sequence $\Lambda_{n}^{\psi,\phi}$
is AEN whenever $r<S(\psi^{A})/S(\phi^{A})$. This means that in the
multipartite setting, the sequence $\Lambda_{n}^{\psi,\phi}$ is AEN
in the bipartition $T|\overline{T}$ whenever $r<S(\psi^{T})/S(\phi^{T})$.
Thus, for 
\begin{equation}
r<\min_{T}\frac{S(\psi^{T})}{S(\phi^{T})}\label{eq:q}
\end{equation}
 the sequence $\Lambda_{n}^{\psi,\phi}$ is AEN in any bipartition.
In summary, this shows that it is possible to convert $\ket{\psi}$
into $\ket{\phi}$ via AEN operations at any rate $r$ which fulfills
Eq.~(\ref{eq:q}). 
\end{proof}
The above theorem leads to several implications, which we will discuss
in the following. Within the LOCC framework, an asymptotic transformation
between two states, $\rho$ and $\sigma$, is dubbed reversible if
the transformation rates satisfy the condition
\begin{equation}
R(\rho\rightarrow\sigma)=R(\sigma\rightarrow\rho)^{-1}.\label{eq:Reversibility}
\end{equation}
As mentioned earlier, a salient characteristic of entanglement theory
is the general irreversibility of asymptotic transformations via LOCC~\citep{PhysRevLett.86.5803},
implying the existence of states that violate Eq.~(\ref{eq:Reversibility}).
Though in the bipartite context, asymptotic LOCC transformations between
pure entangled states are reversible~\citep{PhysRevA.53.2046}, this
ceases to hold when extended to scenarios involving more than two
parties, even for pure states~\citep{PhysRevA.63.012307}.

In the AEN setting, we say that a reversible transformation between
$\rho$ and $\sigma$ is possible if $R_{\mathrm{AEN}}(\rho\rightarrow\sigma)=R_{\mathrm{AEN}}(\sigma\rightarrow\rho)^{-1}$.
From Theorem~\ref{thm:Main} we immediately see that a reversible
transformation between multipartite pure states is possible if and
only if 
\begin{equation}
\frac{S(\psi^{T})}{S(\phi^{T})}=\frac{S(\psi^{T'})}{S(\phi^{T'})}\label{eq:ReversibilityMultipartite}
\end{equation}
for any two subsets of all the parties $T$ and $T'$. This directly
implies that asymptotic transformations between multipartite pure
states are irreversible in general even under AEN operations. 

On the other hand, reversibility can be established for some important
classes of states. In particular, our framework enables reversible
transformations between GHZ and $W$ states, i.e., $3$-qubit states
of the form
\begin{align}
\ket{W} & =\frac{1}{\sqrt{3}}(\ket{001}+\ket{010}+\ket{100}),\\
\ket{\mathrm{GHZ}} & =\frac{1}{\sqrt{2}}(\ket{000}+\ket{111}).
\end{align}
By symmetry of the states, it is straightforward to see that Eq.~(\ref{eq:ReversibilityMultipartite})
is fulfilled in this case. For the transformation rate we obtain 
\begin{equation}
R_{\mathrm{AEN}}(\ket{W}\rightarrow\ket{\mathrm{GHZ}})=h(1/3)\approx0.92
\end{equation}
with the binary entropy $h(x)=-x\log_{2}x-(1-x)\log_{2}(1-x)$. Note
that a reversible conversion between these states is not possible
via LOCC, and even in an extended setting allowing for all operations
which preserve the positivity of the partial transpose in any bipartition~\citep{PhysRevA.72.042325}. 

Another interesting case when the AEN setting allows for reversible
transformations is a conversion between two GHZ states and three singlets
in a tripartite configuration. In more detail, each of the three parties,
Alice, Bob and Charlie, are holding two qubits and wish to convert
the following states into each other:
\begin{align}
\ket{\psi}^{ABC} & =\ket{\mathrm{GHZ}}^{A_{1}B_{1}C_{1}}\otimes\ket{\mathrm{GHZ}}^{A_{2}B_{2}C_{2}},\\
\ket{\phi}^{ABC} & =\ket{\psi^{-}}^{A_{1}B_{2}}\otimes\ket{\psi^{-}}^{B_{1}C_{2}}\otimes\ket{\psi^{-}}^{C_{1}A_{2}}.
\end{align}
Also in this case it is straightforward to see that Eq.~(\ref{eq:ReversibilityMultipartite})
holds, and moreover in the AEN setting these states can be asymptotically
interconverted with unit rate, as follows directly from Theorem~\ref{thm:Main}.
Note that a reversible interconversion between these states is not
possible in the LOCC framework~\citep{PhysRevA.63.012307,linden1999reversibility}.

\begin{figure*}
\includegraphics[width=0.75\paperwidth]{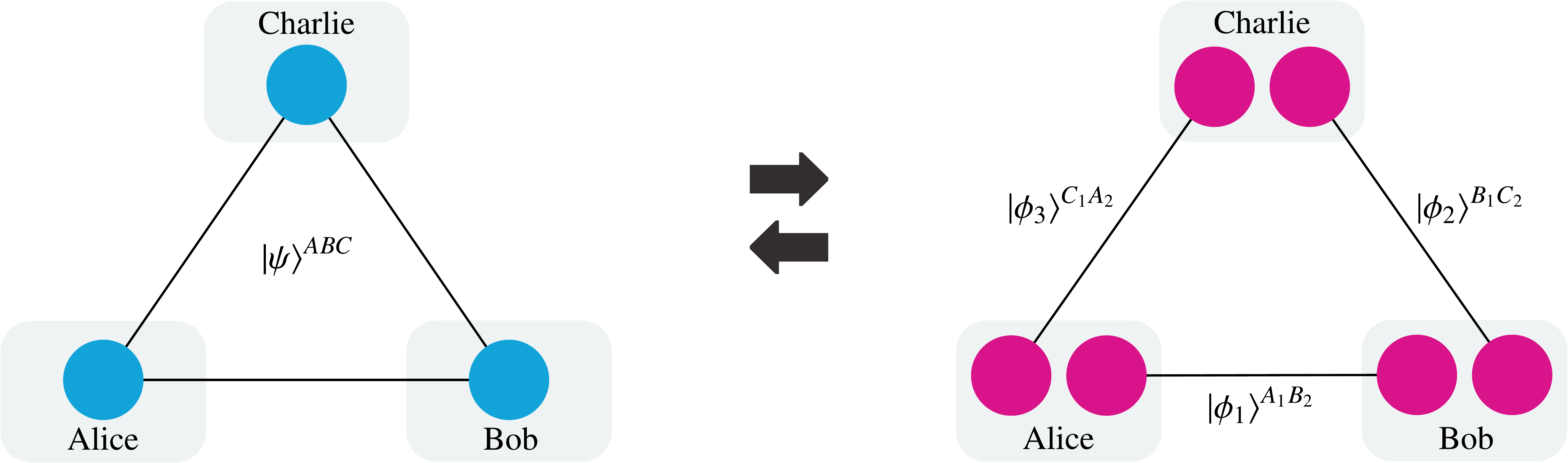}

\caption{\protect\label{fig:REG}Through the application of AEN operations,
we can execute a reversible conversion of any tripartite pure state
$\ket{\psi}^{ABC}$ (as shown in the left part of the figure) into
a pure state consisting of bipartite entangled states between each
pair of parties (represented in the right part of the figure).}
\end{figure*}
We further notice that in the tripartite AEN setting singlets shared
by each of the two parties form a reversible entanglement generating
set (REGS). In more detail, we will show that for any tripartite state
$\ket{\psi}^{ABC}$ and any $\varepsilon>0$ there exists a state
$\ket{\psi_{\varepsilon}}^{ABC}$ such that $|\!\braket{\psi|\psi_{\varepsilon}}\!|^{2}>1-\varepsilon$
and integers $n$, $m_{1}$, $m_{2}$ and $m_{3}$ such that 
\begin{equation}
R_{\mathrm{AEN}}\left(\ket{\psi_{\varepsilon}}^{\otimes n}\rightarrow\ket{\psi^{-}}_{A_{1}B_{2}}^{\otimes m_{1}}\otimes\ket{\psi^{-}}_{B_{1}C_{2}}^{\otimes m_{2}}\otimes\ket{\psi^{-}}_{C_{1}A_{2}}^{\otimes m_{3}}\right)=R_{\mathrm{AEN}}\left(\ket{\psi^{-}}_{A_{1}B_{2}}^{\otimes m_{1}}\otimes\ket{\psi^{-}}_{B_{1}C_{2}}^{\otimes m_{2}}\otimes\ket{\psi^{-}}_{C_{1}A_{2}}^{\otimes m_{3}}\rightarrow\ket{\psi_{\varepsilon}}^{\otimes n}\right)=1.
\end{equation}
This means that every tripartite pure state can be approximated by
a state which is reversibly interconvertible into singlets shared
between the parties. We note that the concept of REGS has been first
discussed in~\citep{PhysRevA.63.012307}.

To prove this, we will first show that any tripartite state $\ket{\psi}^{ABC}$
is reversibly interconvertible into a state comprising bipartite entangled
states shared between each pair of parties (see also Fig.~\ref{fig:REG}).
In more detail, we will show that for any tripartite pure state $\ket{\psi}^{ABC}$
there exists a state of the form 
\begin{equation}
\ket{\phi}^{ABC}=\ket{\phi_{1}}^{A_{1}B_{2}}\otimes\ket{\phi_{2}}^{B_{1}C_{2}}\otimes\ket{\phi_{3}}^{C_{1}A_{2}}
\end{equation}
having the same von Neumann entropies of the subsystems as $\ket{\psi}^{ABC}$,
i.e., \begin{subequations}
\begin{align}
S(\psi^{A}) & =S(\phi^{A}),\\
S(\psi^{B}) & =S(\phi^{B}),\\
S(\psi^{C}) & =S(\phi^{C}).
\end{align}
\end{subequations} For this, let us denote the entanglement entropies
of the states $\ket{\phi_{i}}$ by $s_{i}=E(\phi_{i})$. For a given
state $\ket{\psi}^{ABC}$ there exists a state $\ket{\phi}^{ABC}$
with the claimed features whenever there exist nonnegative numbers
$s_{i}$ fulfilling the conditions \begin{subequations}
\begin{align}
S(\psi^{A}) & =s_{1}+s_{3},\\
S(\psi^{B}) & =s_{1}+s_{2},\\
S(\psi^{C}) & =s_{2}+s_{3}.
\end{align}
\end{subequations} Solving these equations for $s_{i}$ we obtain
\begin{subequations}\label{eq:REG}
\begin{align}
s_{1} & =\frac{1}{2}\left[S(\psi^{A})+S(\psi^{B})-S(\psi^{C})\right],\\
s_{2} & =\frac{1}{2}\left[S(\psi^{B})+S(\psi^{C})-S(\psi^{A})\right],\\
s_{3} & =\frac{1}{2}\left[S(\psi^{A})+S(\psi^{C})-S(\psi^{B})\right].
\end{align}
\end{subequations} By the subadditivity of the von Neumann entropy
we immediately see that each term on the right-hand side of Eqs.~(\ref{eq:REG})
is nonnegative. This proves that the state $\ket{\phi}^{ABC}$ with
the claimed features exists for any tripartite pure state $\ket{\psi}^{ABC}$.
Using Theorem~\ref{thm:Main}, we see that the conversion rates fulfill
\begin{equation}
R_{\mathrm{AEN}}(\ket{\psi}\rightarrow\ket{\phi})=R_{\mathrm{AEN}}(\ket{\phi}\rightarrow\ket{\psi})=1
\end{equation}
which means that $\ket{\psi}^{ABC}$ and $\ket{\phi}^{ABC}$ are reversibly
interconvertible via AEN operations. 

Assume now that the state $\ket{\psi}^{ABC}$ is chosen such that
$s_{1}$, $s_{2}$, and $s_{3}$ in Eqs.~(\ref{eq:REG}) are rational.
Then, there exists an integer $n$ such that $ns_{1}$, $ns_{2}$,
and $ns_{3}$ are integers. Moreover, Theorem~\ref{thm:Main} directly
implies that $\ket{\psi}^{\otimes n}$ can be reversibly interconverted
into the state 
\begin{equation}
\ket{\phi'}=\ket{\psi^{-}}_{A_{1}B_{2}}^{\otimes ns_{1}}\otimes\ket{\psi^{-}}_{B_{1}C_{2}}^{\otimes ns_{2}}\otimes\ket{\psi^{-}}_{C_{1}A_{2}}^{\otimes ns_{3}}
\end{equation}
 via AEN operations, i.e., 
\begin{equation}
R_{\mathrm{AEN}}(\ket{\psi}^{\otimes n}\rightarrow\ket{\phi'})=R_{\mathrm{AEN}}(\ket{\phi'}\rightarrow\ket{\psi}^{\otimes n})=1.
\end{equation}
This proves that singlets form a REGS in this case.

If $s_{1}$, $s_{2}$, and $s_{3}$ are not rational, then for any
$\varepsilon>0$ there exists a state $\ket{\psi_{\varepsilon}}^{ABC}$
such that $|\!\braket{\psi|\psi_{\varepsilon}}\!|^{2}>1-\varepsilon$
and \begin{subequations}
\begin{align}
s_{1}' & =\frac{1}{2}\left[S(\psi_{\varepsilon}^{A})+S(\psi_{\varepsilon}^{B})-S(\psi_{\varepsilon}^{C})\right],\\
s_{2}' & =\frac{1}{2}\left[S(\psi_{\varepsilon}^{B})+S(\psi_{\varepsilon}^{C})-S(\psi_{\varepsilon}^{A})\right],\\
s_{3}' & =\frac{1}{2}\left[S(\psi_{\varepsilon}^{A})+S(\psi_{\varepsilon}^{C})-S(\psi_{\varepsilon}^{B})\right]
\end{align}
 \end{subequations} are all rational. Similarly as above, it follows
that 
\begin{equation}
R_{\mathrm{AEN}}(\ket{\psi_{\varepsilon}}^{\otimes n}\rightarrow\ket{\phi'})=R_{\mathrm{AEN}}(\ket{\phi'}\rightarrow\ket{\psi_{\varepsilon}}^{\otimes n})=1
\end{equation}
holds true for some $n$. This proves that singlets form a REGS for
AEN transformations in the tripartite setting. Note that in the LOCC
framework singlets are not enough to generate all states reversibly
for more than two parties~\citep{PhysRevA.63.012307,linden1999reversibility}. 

At this stage, one might wonder whether the choice of the relative
entropy of entanglement in Eq. (\ref{eq:epsilon}) is essential, or
if other entanglement measures could similarly yield a result analogous
to Theorem~\ref{thm:Main}. Although we cannot fully resolve this
question at present, we note that the proof of our main result in
Theorem~\ref{thm:Main} relies on several specific properties of
the relative entropy of entanglement. Since other entanglement quantifiers
may not share all of the properties used in the proofs, it appears
unlikely that our approach can be directly extended to other entanglement
measures.

\section{\protect\label{sec:Proofs}Methods}

In this section we give more details on the technical methods used
in the proof of Theorem~\ref{thm:Main}. The key element of the proof
is Proposition~\ref{prop:Main}. As a preparation for the proof of
Proposition~\ref{prop:Main} we need the following results.

We will first consider the following sequence of transformations:
\begin{align}
\Lambda_{n}\left[\rho\right] & =\mathrm{Tr}\left[\ket{\psi^{-}}\!\bra{\psi^{-}}^{\otimes n}\rho\right]\ket{\psi^{-}}\!\bra{\psi^{-}}^{\otimes\left\lfloor rn\right\rfloor }\otimes\ket{0}\!\bra{0}^{K}+\mathrm{Tr}\left[\left(\openone-\ket{\psi^{-}}\!\bra{\psi^{-}}^{\otimes n}\right)\rho\right]\mu_{s}\otimes\ket{1}\!\bra{1}^{K}\label{eq:LambdaN}
\end{align}
with some fully separable state $\mu_{s}$, and $K$ is a register
in the possession of Alice. We will first show that this transformation
generates a vanishingly small amount of entanglement for large $n$.
As entanglement quantifiers we will use the generalized robustness
of entanglement $R_{\mathrm{g}}$ and the relative entropy of entanglement
$E_{\mathrm{r}}$ defined in Section~\ref{sec:Preliminaries}.
\begin{prop}
\label{prop:RobustnessSeparable}For any separable state $\rho_{s}$
and any $r>0$ it holds that 
\begin{equation}
R_{\mathrm{g}}\left(\Lambda_{n}\left[\rho_{s}\right]\right)\leq\frac{1}{2^{n(1-r)}}.
\end{equation}
\end{prop}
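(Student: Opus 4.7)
The plan is to exploit the fact that on a separable input $\rho_s$ the output $\Lambda_n[\rho_s]$ is a two-term convex mixture $p\omega+(1-p)\tau$, where $\omega=\ket{\psi^-}\!\bra{\psi^-}^{\otimes\lfloor rn\rfloor}\otimes\ket{0}\!\bra{0}^K$ carries all the entanglement, $\tau=\mu_s\otimes\ket{1}\!\bra{1}^K$ is separable, and the weight $p=\bra{\psi^-}^{\otimes n}\rho_s\ket{\psi^-}^{\otimes n}$ is the overlap of a separable state with the $n$-fold singlet. First I would bound $p$: since $\ket{\psi^-}^{\otimes n}$ is maximally entangled in the $A^n|B^n$ cut on dimension $2^n\times 2^n$, the standard singlet-fraction bound $\bra{\Phi}\sigma_s\ket{\Phi}\leq 1/d$, valid for any $d\times d$ maximally entangled $\ket{\Phi}$ and any separable $\sigma_s$ (a one-line Cauchy--Schwarz argument on product vectors), yields $p\leq 1/2^n$.

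Next I would establish the convexity-type inequality $R_{\mathrm{g}}(p\omega+(1-p)\tau)\leq p\,R_{\mathrm{g}}(\omega)$ whenever $\tau$ is separable. This is a short robustness-decomposition argument: pick $\sigma$ and $s=R_{\mathrm{g}}(\omega)$ so that $(\omega+s\sigma)/(1+s)=\tilde\sigma\in\mathcal{S}$; then
\begin{equation}
p\omega+(1-p)\tau+ps\sigma=p(1+s)\tilde\sigma+(1-p)\tau
\end{equation}
is a nonnegative combination of separable states with total weight $1+ps$, so it equals $1+ps$ times a separable state, and the claimed inequality follows from the definition of $R_{\mathrm{g}}$. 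Since tensoring with a local pure ancilla leaves entanglement untouched, $R_{\mathrm{g}}(\omega)=R_{\mathrm{g}}(\ket{\psi^-}^{\otimes\lfloor rn\rfloor})=2^{\lfloor rn\rfloor}-1$ by the standard identity $R_{\mathrm{g}}(\ket{\Phi})=d-1$ for a maximally entangled vector in $d\times d$. Putting the three ingredients together gives
\begin{equation}
R_{\mathrm{g}}(\Lambda_n[\rho_s])\leq p\,\bigl(2^{\lfloor rn\rfloor}-1\bigr)\leq\frac{2^{rn}}{2^n}=\frac{1}{2^{n(1-r)}},
\end{equation}
which is the desired bound.

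The main conceptual step is the robustness-decomposition inequality: one has to remember that the optimizing state $\sigma$ in the definition of $R_{\mathrm{g}}$ need not be separable, but in the presence of a second, genuinely separable component $\tau$ the ``entangled overhead'' $ps\sigma$ can be absorbed into a separable mixture, so that mixing in the entangled part $\omega$ with weight $p$ costs only a factor $p$ in the generalized robustness. The two exponential inputs $p\leq 2^{-n}$ and $R_{\mathrm{g}}(\omega)\leq 2^{rn}$ are standard, state-independent facts about maximally entangled states, and their product is exactly what produces the decay $2^{-n(1-r)}$ guaranteed by the proposition.
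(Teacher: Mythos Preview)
Your proof is correct and follows essentially the same route as the paper: bound the singlet fraction by $2^{-n}$, compute $R_{\mathrm{g}}$ of the $\lfloor rn\rfloor$-fold singlet as $2^{\lfloor rn\rfloor}-1$, and combine via the inequality $R_{\mathrm{g}}(p\omega+(1-p)\tau)\leq pR_{\mathrm{g}}(\omega)$ for separable $\tau$. The only cosmetic difference is that the paper obtains this last inequality by citing convexity of $R_{\mathrm{g}}$ together with $R_{\mathrm{g}}(\tau)=0$, whereas you supply a direct robustness-decomposition argument; both yield the same bound.
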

\begin{proof}
By convexity of the generalized robustness of entanglement~\citep{PhysRevA.67.054305}
we obtain the following bound for any separable state $\rho_{s}$:
\begin{equation}
R_{\mathrm{g}}\left(\Lambda_{n}\left[\rho_{s}\right]\right)\leq\mathrm{Tr}\left[\ket{\psi^{-}}\!\bra{\psi^{-}}^{\otimes n}\rho_{s}\right]R_{\mathrm{g}}\left(\ket{\psi^{-}}\!\bra{\psi^{-}}^{\otimes\left\lfloor rn\right\rfloor }\right).
\end{equation}
Recalling that the generalized robustness of a pure state is given
by~\citep{PhysRevA.67.054305,PhysRevA.68.012308} 
\begin{equation}
R_{\mathrm{g}}(\psi)=\left(\sum_{i}c_{i}\right)^{2}-1,
\end{equation}
where $c_{i}$ are the Schmidt coefficients (i.e. $\ket{\psi}=\sum_{i}c_{i}\ket{ii}$),
we obtain
\begin{equation}
R_{\mathrm{g}}\left(\ket{\psi^{-}}\!\bra{\psi^{-}}^{\otimes m}\right)=\left(\sum_{i=0}^{2^{m}-1}\frac{1}{\sqrt{2^{m}}}\right)^{2}-1=2^{m}-1.
\end{equation}
Moreover, for any separable state $\rho_{s}$ it holds that~\citep{PhysRevA.60.1888}
\begin{equation}
\mathrm{Tr}\left[\ket{\psi^{-}}\!\bra{\psi^{-}}^{\otimes n}\rho_{s}\right]\leq\frac{1}{2^{n}}.
\end{equation}
Collecting these results, we arrive at the desired inequality:
\begin{align}
R_{\mathrm{g}}\left(\Lambda_{n}\left[\rho_{s}\right]\right) & \leq\mathrm{Tr}\left[\ket{\psi^{-}}\!\bra{\psi^{-}}^{\otimes n}\rho_{s}\right]R_{\mathrm{g}}\left(\ket{\psi^{-}}\!\bra{\psi^{-}}^{\otimes\left\lfloor rn\right\rfloor }\right)\leq\frac{2^{\left\lfloor rn\right\rfloor }-1}{2^{n}}\leq\frac{1}{2^{n(1-r)}}.
\end{align}
\end{proof}
We will further make use of the following proposition which has been
proven in~\citep{Brandao2010b}.
\begin{prop}
\label{prop:RelativeEntropy}For any quantum operation $\Lambda$
fulfilling 
\begin{equation}
\max_{\rho_{s}\in\mathcal{S}}R_{\mathrm{g}}\left(\Lambda\left[\rho_{s}\right]\right)\leq\varepsilon\label{eq:RobustnessDelta}
\end{equation}
the following holds for any quantum state $\rho$:
\begin{equation}
E_{\mathrm{r}}\left(\Lambda\left[\rho\right]\right)-E_{\mathrm{r}}\left(\rho\right)\leq\log_{2}\left(1+\varepsilon\right).\label{eq:RelEntDelta}
\end{equation}
\end{prop}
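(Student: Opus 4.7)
The plan is to use the optimal separable state for $\rho$ to construct a near-optimal separable state for $\Lambda[\rho]$, and then control the resulting relative entropy using operator monotonicity of the logarithm together with the data-processing inequality.

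First I would fix any state $\rho$ and choose a separable state $\sigma^{\ast}$ attaining the minimum in $E_{\mathrm{r}}(\rho)=S(\rho\|\sigma^{\ast})$. Since $\sigma^{\ast}$ is separable, the hypothesis \eqref{eq:RobustnessDelta} gives $R_{\mathrm{g}}(\Lambda[\sigma^{\ast}])\leq\varepsilon$, so by the definition of the generalized robustness there exists a state $\tau$ such that
\begin{equation}
\omega := \frac{\Lambda[\sigma^{\ast}]+\varepsilon\,\tau}{1+\varepsilon}
\end{equation}
is separable. This $\omega$ is the candidate feasible point I would plug into the definition of $E_{\mathrm{r}}(\Lambda[\rho])$, yielding $E_{\mathrm{r}}(\Lambda[\rho])\leq S(\Lambda[\rho]\|\omega)$.

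Next I would exploit the operator inequality $\omega\geq\Lambda[\sigma^{\ast}]/(1+\varepsilon)$, which is immediate from $\tau\geq 0$. By operator monotonicity of the logarithm (applied to base $2$),
\begin{equation}
\log_{2}\omega \;\geq\; \log_{2}\Lambda[\sigma^{\ast}] - \log_{2}(1+\varepsilon)\,\openone.
\end{equation}
Substituting into $S(\Lambda[\rho]\|\omega)=-S(\Lambda[\rho])-\mathrm{Tr}(\Lambda[\rho]\log_{2}\omega)$ and using $\mathrm{Tr}(\Lambda[\rho])=1$ gives $S(\Lambda[\rho]\|\omega)\leq S(\Lambda[\rho]\|\Lambda[\sigma^{\ast}])+\log_{2}(1+\varepsilon)$. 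Finally, the data-processing inequality for the quantum relative entropy under the CPTP map $\Lambda$ yields $S(\Lambda[\rho]\|\Lambda[\sigma^{\ast}])\leq S(\rho\|\sigma^{\ast})=E_{\mathrm{r}}(\rho)$, and assembling the chain delivers \eqref{eq:RelEntDelta}.

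The only step requiring genuine care is the passage from the robustness decomposition to the logarithmic bound: it relies crucially on the fact that $\log_{2}$ is operator monotone on positive operators, which is what converts the additive structure in the definition of $R_{\mathrm{g}}$ into the additive $\log_{2}(1+\varepsilon)$ correction to the relative entropy. Once that is in place, the remainder is a standard application of monotonicity of $S(\cdot\|\cdot)$ under quantum channels, so I do not expect any further obstacles.
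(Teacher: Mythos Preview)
Your proof is correct and follows essentially the same route as the paper's: pick the optimal separable $\sigma^{\ast}$ for $\rho$, use the robustness bound on $\Lambda[\sigma^{\ast}]$ to produce a separable $\omega$ dominating $\Lambda[\sigma^{\ast}]/(1+\varepsilon)$, then combine operator monotonicity of $\log_{2}$ with the data-processing inequality. The only cosmetic difference is that the paper works with the actual robustness value $\delta=R_{\mathrm{g}}(\Lambda[\sigma^{\ast}])\leq\varepsilon$ in the decomposition and bounds by $\varepsilon$ at the end, whereas you take the mixing parameter to be $\varepsilon$ directly (which is fine since the feasible set in the definition of $R_{\mathrm{g}}$ is upward-closed in $s$).
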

In the next step, we will provide an upper bound for the fidelity
between $n$ singlets and any state $\rho$ on the same Hilbert space. 
\begin{prop}
\label{prop:SingletFidelity}The following inequality holds for all
states $\rho$, all $n$, and all $r>0$:
\begin{equation}
F\left(\ket{\psi^{-}}\!\bra{\psi^{-}}^{\otimes n},\rho\right)\leq\frac{E_{\mathrm{r}}(\rho)+\log_{2}\left(1+\frac{1}{2^{n(1-r)}}\right)}{\left\lfloor rn\right\rfloor }.
\end{equation}
\end{prop}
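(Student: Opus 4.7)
The plan is to use the channel $\Lambda_n$ from Eq.~(\ref{eq:LambdaN}) as a probe of the overlap. Since $\ket{\psi^-}^{\otimes n}$ is pure, the fidelity in question equals the plain overlap $F := \mathrm{Tr}[\ket{\psi^-}\!\bra{\psi^-}^{\otimes n}\rho]$, so feeding $\rho$ into $\Lambda_n$ produces the flagged output
$$\Lambda_n[\rho] = F\,\ket{\psi^-}\!\bra{\psi^-}^{\otimes\lfloor rn\rfloor}\otimes\ket{0}\!\bra{0}^K + (1-F)\,\mu_s\otimes\ket{1}\!\bra{1}^K.$$
The strategy is then to sandwich $E_{\mathrm r}(\Lambda_n[\rho])$ between a lower bound coming from this flagged structure and an upper bound coming from the two previous propositions, and to solve the resulting inequality for $F$.

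For the upper bound I would simply combine Propositions~\ref{prop:RobustnessSeparable} and \ref{prop:RelativeEntropy}: the former yields $\max_{\rho_s\in\mathcal{S}} R_{\mathrm g}(\Lambda_n[\rho_s]) \le 2^{-n(1-r)}$, and feeding this into the latter gives $E_{\mathrm r}(\Lambda_n[\rho]) - E_{\mathrm r}(\rho) \le \log_2\!\bigl(1 + 2^{-n(1-r)}\bigr)$. For the lower bound I would exploit that the flag register $K$ sits on Alice's side, so a projective measurement in $\{\ket{0},\ket{1}\}$ is a selective local operation that perfectly distinguishes the two branches. Strong monotonicity of the relative entropy of entanglement under such selective local measurements then gives
$$E_{\mathrm r}(\Lambda_n[\rho]) \ge F\,E_{\mathrm r}\!\left(\ket{\psi^-}\!\bra{\psi^-}^{\otimes\lfloor rn\rfloor}\right) + (1-F)\,E_{\mathrm r}(\mu_s) = F\lfloor rn\rfloor,$$
where I used $E_{\mathrm r}(\mu_s)=0$ (since $\mu_s$ is separable) and the standard identity $E_{\mathrm r}(\ket{\psi^-}\!\bra{\psi^-}^{\otimes m}) = m$, valid because $m$ copies of a singlet form a maximally entangled state of Schmidt rank $2^m$. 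Chaining both inequalities and dividing by $\lfloor rn\rfloor$ delivers the stated bound.

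The only delicate step is the lower bound. The cleanest route invokes the Vedral--Plenio strong monotonicity of $E_{\mathrm r}$ under LOCC selective measurements, which is not automatic from ordinary convexity and must be quoted. An alternative, slightly more elementary route is to use the data-processing inequality for the quantum relative entropy applied to the channel that reads out Alice's flag: because the flag states are orthogonal, one obtains the convex-combination form of the relative entropy directly and then minimizes over separable targets. Everything else is bookkeeping: recognizing the fidelity as an overlap with a pure state and assembling the chain of inequalities above.
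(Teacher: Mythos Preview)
Your proof is correct and follows essentially the same route as the paper: the paper also combines Propositions~\ref{prop:RobustnessSeparable} and~\ref{prop:RelativeEntropy} for the upper bound on $E_{\mathrm r}(\Lambda_n[\rho])$, then uses the flagged structure of $\Lambda_n[\rho]$ to evaluate $E_{\mathrm r}(\Lambda_n[\rho])$ in terms of $F$. The only cosmetic difference is that the paper asserts the \emph{equality} $E_{\mathrm r}(\Lambda_n[\rho]) = F\,E_{\mathrm r}\bigl(\ket{\psi^-}\!\bra{\psi^-}^{\otimes\lfloor rn\rfloor}\bigr)$ (the flags property of $E_{\mathrm r}$, quoting~\citep{PhysRevLett.78.2275}), whereas you establish only the inequality $\geq$ via strong monotonicity; since only that direction is needed, the arguments coincide.
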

\begin{proof}
Due to Propositions~\ref{prop:RobustnessSeparable} and \ref{prop:RelativeEntropy},
we find that 
\begin{equation}
E_{\mathrm{r}}(\Lambda_{n}[\rho])\leq E_{\mathrm{r}}(\rho)+\log_{2}\left(1+\frac{1}{2^{n(1-r)}}\right)
\end{equation}
holds for all states $\rho$, where $\Lambda_{n}$ is the transformation
defined in Eq.~(\ref{eq:LambdaN}). In the next step, observe that\footnote{Note that this equality would not hold if one would define $\Lambda_{n}$
without the register $K$.} 
\begin{equation}
E_{\mathrm{r}}(\Lambda_{n}[\rho])=\mathrm{Tr}\left[\ket{\psi^{-}}\!\bra{\psi^{-}}^{\otimes n}\rho\right]E_{\mathrm{r}}\left(\ket{\psi^{-}}\!\bra{\psi^{-}}^{\otimes\left\lfloor rn\right\rfloor }\right),
\end{equation}
which follows directly from the definition of $\Lambda_{n}$ and Eq.~(\ref{eq:flags}).
This means that 
\begin{equation}
F\left(\ket{\psi^{-}}\!\bra{\psi^{-}}^{\otimes n},\rho\right)=\mathrm{Tr}\left[\ket{\psi^{-}}\!\bra{\psi^{-}}^{\otimes n}\rho\right]\leq\frac{E_{\mathrm{r}}(\rho)+\log_{2}\left(1+\frac{1}{2^{n(1-r)}}\right)}{E_{\mathrm{r}}\left(\ket{\psi^{-}}\!\bra{\psi^{-}}^{\otimes\left\lfloor rn\right\rfloor }\right)}.
\end{equation}
The proof is complete by noting that $E_{\mathrm{r}}(\ket{\psi^{-}}\!\bra{\psi^{-}}^{\otimes\left\lfloor rn\right\rfloor })=\left\lfloor rn\right\rfloor $.
\end{proof}
Another useful tool which will be applied in this article is a variation
of Corollary 7 in~\citep{Hayashi_2003}. In particular, we consider
LOCC protocols which convert $\ket{\psi}^{\otimes n}$ into a maximally
entangled state of local dimension $L_{n}$, i.e., the target state
is 
\begin{equation}
\ket{\phi_{L_{n}}}=\frac{1}{\sqrt{L_{n}}}\sum_{i=0}^{L_{n}-1}\ket{ii}.
\end{equation}
We assume that the transformation is exact but probabilistic, having
success probability $P_{n}$. Moreover, let $p$ be a probability
distribution containing the Schmidt coefficients of $\ket{\psi}$,
i.e., $\ket{\psi}=\sum_{i}\sqrt{p_{i}}\ket{ii}$. For $r>0$ we define
the function~\citep{Hayashi_2003} 
\begin{equation}
E(r)=\min_{q:D(q||p)\leq r}\left\{ D(q||p)+H(q)\right\} ,
\end{equation}
where $q$ is a probability distribution, $H(q)=-\sum_{i}q_{i}\log_{2}q_{i}$
is the Shannon entropy, and $D(q||p)=\sum_{i}q_{i}\log_{2}\frac{q_{i}}{p_{i}}$
is the relative entropy. The following proposition is a direct consequence
of Corollary 7 in~\citep{Hayashi_2003}.
\begin{prop}
For any $r>0$, $\varepsilon>0$, and $\delta>0$ there exists a sequence
of LOCC protocols converting $\ket{\psi}^{\otimes n}$ into the state
$\ket{\phi_{L_{n}}}$ with success probability $P_{n}$ such that
the following inequalities hold for all $n$ large enough:
\begin{align}
r-\delta & \leq-\frac{1}{n}\log_{2}(1-P_{n}),\\
\frac{1}{n}\log_{2}L_{n} & \geq E(r)-\varepsilon.
\end{align}
\end{prop}
With this result, we are now able to prove the following proposition.
\begin{prop}
\label{prop:PureExponential}For any $R<S(\psi^{A})$ there exists
$\alpha>0$ and a sequence of LOCC protocols $\Phi_{n}$ such that
\begin{equation}
F\left(\Phi_{n}\left[\psi^{\otimes n}\right],\ket{\phi^{+}}\!\bra{\phi^{+}}^{\otimes\left\lfloor Rn\right\rfloor }\right)\geq1-2^{-\alpha n}\label{eq:PureExponential}
\end{equation}
for all $n$ large enough.
\end{prop}
\begin{proof}
Note that $E(r)$ is continuous and monotonically decreasing for $r>0$,
and moreover $\lim_{r\rightarrow0}E(r)=H(p)=S(\psi^{A})$~\citep{Hayashi_2003}.
By continuity, for any $\varepsilon'>0$ there exists $r_{\varepsilon'}>0$
such that $E(r_{\varepsilon'})=S(\psi^{A})-\varepsilon'$. It follows
that for any $\varepsilon>0$, $\delta>0$, and $\varepsilon'>0$
the following inequalities hold for all $n$ large enough:
\begin{align}
P_{n} & \geq1-2^{-n(r_{\varepsilon'}-\delta)},\\
\log_{2}L_{n} & \geq n\left[S(\psi^{A})-\varepsilon'-\varepsilon\right].
\end{align}
Choosing $R=S(\psi^{A})-\varepsilon'-\varepsilon$, we conclude that
for any $\varepsilon>0$, $\delta>0$, and $\varepsilon'>0$ there
exists a sequence of (deterministic) LOCC protocol $\Phi_{n}$ such
that the following inequality holds for all $n$ large enough:
\begin{equation}
F\left(\Phi_{n}\left[\psi^{\otimes n}\right],\ket{\phi^{+}}\!\bra{\phi^{+}}^{\otimes\left\lfloor Rn\right\rfloor }\right)\geq P_{n}\geq1-2^{-n(r_{\varepsilon'}-\delta)}.
\end{equation}
Since $\varepsilon>0$, $\delta>0$, and $\varepsilon'>0$ can be
chosen arbitrarily, we obtain the desired inequality~(\ref{eq:PureExponential}).
\end{proof}
We will now consider a generalization of the transformation $\Lambda_{n}$
defined above. For two pure states $\ket{\psi}$ and $\ket{\phi}$
and some $q>0$ we define
\begin{align}
\Lambda_{n}^{\psi,\phi}\left[\rho\right] & =\mathrm{Tr}\left[\ket{\psi}\!\bra{\psi}^{\otimes n}\rho\right]\ket{\phi}\!\bra{\phi}^{\otimes\left\lfloor qn\right\rfloor }\otimes\ket{0}\!\bra{0}^{K}+\mathrm{Tr}\left[\left(\openone-\ket{\psi}\!\bra{\psi}^{\otimes n}\right)\rho\right]\mu_{s}\otimes\ket{1}\!\bra{1}^{K},
\end{align}
where $\mu_{s}$ is some fully separable state and the register $K$
is in possession of Alice. Note that this is the type of transformations
which is used in the proof of Theorem~\ref{thm:Main}. The following
proposition is the key element of the proof of Theorem~\ref{thm:Main}.
\begin{prop}
\label{prop:Main}For any two bipartite pure states $\ket{\psi}$
and $\ket{\phi}$ and any $q<S(\psi^{A})/S(\phi^{A})$ the sequence
of operations $\Lambda_{n}^{\psi,\phi}$ is asymptotically entanglement-nonincreasing.
\end{prop}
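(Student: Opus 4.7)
The plan mirrors the structure of Propositions~1--3, generalized from the singlet case to arbitrary pure bipartite states. The workflow has three stages: (i) bound $R_g(\Lambda_n^{\psi,\phi}[\rho_s])$ for every separable $\rho_s$; (ii) promote that bound to a bound on $E_r(\Lambda_n^{\psi,\phi}[\rho])-E_r(\rho)$ for every state $\rho$ via Proposition~2; (iii) check that the resulting estimate vanishes in the regime $q<S(\psi^A)/S(\phi^A)$.

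In stage~(i), convexity of $R_g$ (mirroring Proposition~1) gives
\begin{equation*}
R_g\bigl(\Lambda_n^{\psi,\phi}[\rho_s]\bigr) \leq \mathrm{Tr}\bigl[\ket\psi\!\bra\psi^{\otimes n}\rho_s\bigr]\,R_g\bigl(\ket\phi\!\bra\phi^{\otimes\lfloor qn\rfloor}\bigr),
\end{equation*}
because the separable flagged branch built from $\mu_s$ carries no robustness. The overlap factor is controlled by the pure-vs-separable fidelity bound $\mathrm{Tr}[\ket\psi\!\bra\psi^{\otimes n}\rho_s]\leq\lambda_{\max}(\psi^A)^n$, where $\lambda_{\max}(\psi^A)$ is the largest eigenvalue of the reduced state $\psi^A$, and the robustness factor is evaluated through the Schmidt formula $R_g(\ket\phi\!\bra\phi^{\otimes m})=\bigl(\sum_i c_i(\phi)\bigr)^{2m}-1$ already used in Proposition~1.

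In stage~(ii), Proposition~2 lifts this robustness estimate to the uniform relative-entropy bound
\begin{equation*}
E_r\bigl(\Lambda_n^{\psi,\phi}[\rho]\bigr)-E_r(\rho) \leq \log_2\!\Bigl(1+\lambda_{\max}(\psi^A)^n\bigl(\textstyle\sum_i c_i(\phi)\bigr)^{2\lfloor qn\rfloor}\Bigr),
\end{equation*}
valid for every state $\rho$, reducing the AEN property to decay of this single scalar.

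Stage~(iii) is the technical crux and the main obstacle. A direct exponent count guarantees decay only when $q<H_\infty(\psi^A)/\bigl[2\log_2\sum_i c_i(\phi)\bigr]$, a ratio that coincides with the claimed tight rate $S(\psi^A)/S(\phi^A)$ only in the degenerate case where both $\ket\psi$ and $\ket\phi$ are maximally entangled. Closing the gap between these single-shot and regularized exponents---upgrading the min-entropy $H_\infty(\psi^A)$ to the Shannon entropy $S(\psi^A)$ and the log-quantity $2\log_2\sum_i c_i(\phi)$ to $S(\phi^A)$---calls for a typical-subspace refinement in the spirit of the quantum Stein's lemma for entanglement of Brand\~ao--Plenio: effectively replacing the rank-one projector $\ket\psi\!\bra\psi^{\otimes n}$ by an entanglement-typical projector whose overlap with any separable state is at most $2^{-n(S(\psi^A)-\delta)}$, and the target $\ket\phi\!\bra\phi^{\otimes\lfloor qn\rfloor}$ by a typical-subspace approximation with robustness scaling as $2^{\lfloor qn\rfloor S(\phi^A)}$. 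This asymptotic upgrade, pushing the exponent up to the entropic threshold, is the central technical challenge of the proof.
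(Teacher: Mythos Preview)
Your stages (i)--(ii) are sound, and you correctly diagnose that they deliver only the threshold $q<H_\infty(\psi^A)/\bigl[2\log_2\sum_i c_i(\phi)\bigr]$. But the typical-subspace repair you sketch in stage~(iii) cannot close the gap. The map $\Lambda_n^{\psi,\phi}$ is \emph{fixed}, with the rank-one projector $\ket{\psi}\!\bra{\psi}^{\otimes n}$ hard-wired into its definition; you are not free to replace it by a typical projector. And on the analysis side there is no slack to exploit: the worst separable state for the overlap is the product state aligned with the top Schmidt vector, which attains $\lambda_{\max}(\psi^A)^n$ exactly, and $R_g(\phi^{\otimes m})=(\sum_i c_i)^{2m}-1$ is also exact. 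Both factors in your bound are tight, so no asymptotic refinement of the separable-robustness route can push the exponent up to the Shannon ratio.

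The paper's argument avoids this obstruction by abandoning the $R_g$-on-separable-inputs route altogether. It first computes $E_r(\Lambda_n^{\psi,\phi}[\rho])$ \emph{directly} for every $\rho$, exploiting the local flag $K$ and the fact that $E_r$ of a pure state equals its entanglement entropy:
\[
E_r\bigl(\Lambda_n^{\psi,\phi}[\rho]\bigr)=F(\psi^{\otimes n},\rho)\,\lfloor qn\rfloor\, S(\phi^A).
\]
The task is then to show $F(\psi^{\otimes n},\rho)\,\lfloor qn\rfloor\, S(\phi^A)\leq E_r(\rho)+\varepsilon$. This is done by \emph{reduction to the singlet case}: pick an LOCC protocol $\Phi_n$ distilling $\psi^{\otimes n}$ into $\lfloor Rn\rfloor$ singlets with exponentially small error for any $R<S(\psi^A)$; use the Bures triangle inequality and data processing to convert $F(\psi^{\otimes n},\rho)$ into $F\bigl((\psi^-)^{\otimes\lfloor Rn\rfloor},\Phi_n[\rho]\bigr)$ up to a correction vanishing faster than $1/n$; and finally invoke Proposition~3 to bound this singlet fidelity by $E_r(\Phi_n[\rho])/\lfloor r\lfloor Rn\rfloor\rfloor\leq E_r(\rho)/\lfloor r\lfloor Rn\rfloor\rfloor$. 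Choosing $r$ close to $1$ and $R$ close to $S(\psi^A)$ makes $\lfloor qn\rfloor S(\phi^A)/\lfloor r\lfloor Rn\rfloor\rfloor\leq1$ precisely when $q<S(\psi^A)/S(\phi^A)$. The Shannon entropy enters through the LOCC distillation rate of $\ket\psi$ into singlets, not through a typical-subspace estimate.
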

\begin{proof}
The relative entropy of entanglement of $\Lambda_{n}^{\psi,\phi}[\rho]$
is given by 
\begin{equation}
E_{\mathrm{r}}\left(\Lambda_{n}^{\psi,\phi}[\rho]\right)=\mathrm{Tr}\left[\ket{\psi}\!\bra{\psi}^{\otimes n}\rho\right]\times\left\lfloor qn\right\rfloor S(\phi^{A}).\label{eq:Er}
\end{equation}
Consider now a sequence of states $\rho_{n}$ acting on $n$ copies
of $\mathcal{H}^{AB}$. Our goal is to show that for any such sequence
and any $\varepsilon>0$ it holds that
\begin{equation}
E_{\mathrm{r}}\left(\Lambda_{n}^{\psi,\phi}[\rho_{n}]\right)-E_{\mathrm{r}}\left(\rho_{n}\right)<\varepsilon\label{eq:Sequence}
\end{equation}
for all $n$ large enough. 

Choose some $R>0$. By Proposition~\ref{prop:SingletFidelity}, for
any sequence of states $\sigma_{n}$, any $r>0$ and any $n$ it holds
that
\begin{equation}
F\left(\ket{\psi^{-}}\!\bra{\psi^{-}}^{\otimes\left\lfloor Rn\right\rfloor },\sigma_{n}\right)\leq\frac{E_{\mathrm{r}}(\sigma_{n})+\log_{2}\left(1+\frac{1}{2^{\left\lfloor Rn\right\rfloor (1-r)}}\right)}{\left\lfloor r\left\lfloor Rn\right\rfloor \right\rfloor }.\label{eq:Proof3}
\end{equation}
Choosing $\sigma_{n}=\Phi_{n}[\rho_{n}]$ with some sequence of states
$\rho_{n}$ and LOCC protocols $\Phi_{n}$ we obtain
\begin{align}
F\left(\ket{\psi^{-}}\!\bra{\psi^{-}}^{\otimes\left\lfloor Rn\right\rfloor },\Phi_{n}[\rho_{n}]\right) & \leq\frac{E_{\mathrm{r}}(\Phi_{n}[\rho_{n}])+\log_{2}\left(1+\frac{1}{2^{\left\lfloor Rn\right\rfloor (1-r)}}\right)}{\left\lfloor r\left\lfloor Rn\right\rfloor \right\rfloor }\label{eq:SingletsRN}\\
 & \leq\frac{E_{\mathrm{r}}(\rho_{n})+\log_{2}\left(1+\frac{1}{2^{\left\lfloor Rn\right\rfloor (1-r)}}\right)}{\left\lfloor r\left\lfloor Rn\right\rfloor \right\rfloor },\nonumber 
\end{align}
where we used the fact that the relative entropy of entanglement does
not increase under LOCC~\citep{PhysRevLett.78.2275}. 

In the next step, recall that it is possible to distill the states
$\ket{\psi}$ into singlets at rate $S(\psi^{A})$. In more detail,
for any $R<S(\psi^{A})$ there exists some $\alpha>0$ and a sequence
of LOCC protocols $\Phi_{n}$ such that the following inequality holds
for all $n$ large enough~\citep{Hayashi_2003}:
\begin{equation}
F\left(\Phi_{n}\left[\psi^{\otimes n}\right],\ket{\psi^{-}}\!\bra{\psi^{-}}^{\otimes\left\lfloor Rn\right\rfloor }\right)\geq1-2^{-\alpha n},\label{eq:Distillation-1}
\end{equation}
we also refer to Proposition~\ref{prop:PureExponential} for more
details. From Eq.~(\ref{eq:Distillation-1}) it follows that 
\begin{equation}
D_{B}\left(\Phi_{n}\left[\psi^{\otimes n}\right],\ket{\psi^{-}}\!\bra{\psi^{-}}^{\otimes\left\lfloor Rn\right\rfloor }\right)\leq\sqrt{2-2\sqrt{1-2^{-\alpha n}}},
\end{equation}
where $D_{B}$ is the Bures distance defined in Section~\ref{sec:Preliminaries}.
Recalling that the Bures distance fulfills the triangle inequality
and the data processing inequality, we find for any sequence of states
$\rho_{n}$:
\begin{align}
D_{B}\left(\ket{\psi^{-}}\!\bra{\psi^{-}}^{\otimes\left\lfloor Rn\right\rfloor },\Phi_{n}\left[\rho_{n}\right]\right) & \leq D_{B}\left(\Phi_{n}\left[\psi^{\otimes n}\right],\Phi_{n}\left[\rho_{n}\right]\right)+D_{B}\left(\Phi_{n}\left[\psi^{\otimes n}\right],\ket{\psi^{-}}\!\bra{\psi^{-}}^{\otimes\left\lfloor Rn\right\rfloor }\right)\nonumber \\
 & \leq D_{B}\left(\Phi_{n}\left[\psi^{\otimes n}\right],\Phi_{n}\left[\rho_{n}\right]\right)+\sqrt{2-2\sqrt{1-2^{-\alpha n}}}\nonumber \\
 & \leq D_{B}\left(\psi^{\otimes n},\rho_{n}\right)+\sqrt{2-2\sqrt{1-2^{-\alpha n}}},
\end{align}
which is equivalent to 
\begin{equation}
D_{B}\left(\psi^{\otimes n},\rho_{n}\right)\geq D_{B}\left(\ket{\psi^{-}}\!\bra{\psi^{-}}^{\otimes\left\lfloor Rn\right\rfloor },\Phi_{n}\left[\rho_{n}\right]\right)-\sqrt{2-2\sqrt{1-2^{-\alpha n}}}.
\end{equation}
Expressing this in terms of fidelity we, see that for any $R<S(\psi^{A})$
there exists some $\alpha>0$ such that for any sequence of states
$\rho_{n}$ the following holds for all $n$ large enough:
\begin{align}
F\left(\psi^{\otimes n},\rho_{n}\right) & =\left[1-\frac{1}{2}D_{B}^{2}\left(\psi^{\otimes n},\rho_{n}\right)\right]^{2}\\
 & \leq\left[1-\frac{1}{2}\left(D_{B}\left(\ket{\psi^{-}}\!\bra{\psi^{-}}^{\otimes\left\lfloor Rn\right\rfloor },\Phi_{n}\left[\rho_{n}\right]\right)-\sqrt{2-2\sqrt{1-2^{-\alpha n}}}\right)^{2}\right]^{2}\nonumber \\
 & =\left[1-\frac{1}{2}\left(\sqrt{2-2\sqrt{F\left(\ket{\psi^{-}}\!\bra{\psi^{-}}^{\otimes\left\lfloor Rn\right\rfloor },\Phi_{n}\left[\rho_{n}\right]\right)}}-\sqrt{2-2\sqrt{1-2^{-\alpha n}}}\right)^{2}\right]^{2}.\nonumber 
\end{align}
Here, $\Phi_{n}$ is the sequence of LOCC protocols which distills
$\ket{\psi}$ into singlets, see Eq.~(\ref{eq:Distillation-1}).

Using Eq.~(\ref{eq:Er}), we see that for any $R<S(\psi^{A})$ there
is some $\alpha>0$ such that the following holds for all $n$ large
enough:
\begin{align}
E_{\mathrm{r}}\left(\Lambda_{n}^{\psi,\phi}[\rho_{n}]\right) & =\left\lfloor qn\right\rfloor S(\phi^{A})F\left(\psi^{\otimes n},\rho_{n}\right)\label{eq:PsiProof1}\\
 & \leq\left\lfloor qn\right\rfloor S(\phi^{A})\left[1-\frac{1}{2}\left(\sqrt{2-2\sqrt{F\left(\ket{\psi^{-}}\!\bra{\psi^{-}}^{\otimes\left\lfloor Rn\right\rfloor },\Phi_{n}\left[\rho_{n}\right]\right)}}-\sqrt{2-2\sqrt{1-2^{-\alpha n}}}\right)^{2}\right]^{2}\nonumber \\
 & =\left\lfloor qn\right\rfloor S(\phi^{A})\left[1-\frac{1}{2}\left(V^{2}+W^{2}-2VW\right)\right]^{2},\nonumber 
\end{align}
where we have defined 
\begin{align}
V & =\sqrt{2-2\sqrt{F\left(\ket{\psi^{-}}\!\bra{\psi^{-}}^{\otimes\left\lfloor Rn\right\rfloor },\Phi_{n}\left[\rho_{n}\right]\right)}},\\
W & =\sqrt{2-2\sqrt{1-2^{-\alpha n}}}.
\end{align}
By further defining 
\begin{align}
X & =\sqrt{F\left(\ket{\psi^{-}}\!\bra{\psi^{-}}^{\otimes\left\lfloor Rn\right\rfloor },\Phi_{n}\left[\rho_{n}\right]\right)},\\
Y & =\frac{1}{2}\left(W^{2}-2VW\right),\\
Z & =Y^{2}-2XY\label{eq:Z}
\end{align}
we can express Eq.~(\ref{eq:PsiProof1}) as follows: 
\begin{align}
E_{\mathrm{r}}\left(\Lambda_{n}^{\psi,\phi}[\rho_{n}]\right) & \leq\left\lfloor qn\right\rfloor S(\phi^{A})\left[\sqrt{F\left(\ket{\psi^{-}}\!\bra{\psi^{-}}^{\otimes\left\lfloor Rn\right\rfloor },\Phi_{n}\left[\rho_{n}\right]\right)}-\frac{1}{2}\left(W^{2}-2VW\right)\right]^{2}\nonumber \\
 & =\left\lfloor qn\right\rfloor S(\phi^{A})\left[X-Y\right]^{2}\nonumber \\
 & =\left\lfloor qn\right\rfloor S(\phi^{A})\left[X^{2}+Y^{2}-2XY\right]\nonumber \\
 & =\left\lfloor qn\right\rfloor S(\phi^{A})\left[F\left(\ket{\psi^{-}}\!\bra{\psi^{-}}^{\otimes\left\lfloor Rn\right\rfloor },\Phi_{n}\left[\rho_{n}\right]\right)+Z\right].
\end{align}
Using Eq.~(\ref{eq:SingletsRN}) we conclude that for any $r>0$,
$R<S(\psi^{A})$ the following holds for all $n$ large enough: 
\begin{equation}
E_{\mathrm{r}}\left(\Lambda_{n}^{\psi,\phi}[\rho_{n}]\right)\leq\left\lfloor qn\right\rfloor S(\phi^{A})\frac{E_{\mathrm{r}}(\rho_{n})+\log_{2}\left(1+\frac{1}{2^{\left\lfloor Rn\right\rfloor (1-r)}}\right)}{\left\lfloor r\left\lfloor Rn\right\rfloor \right\rfloor }+\left\lfloor qn\right\rfloor S(\phi^{A})Z.
\end{equation}

Note that for any given value of $q$ in the range $0<q<S(\psi^{A})/S(\phi^{A})$
there exist some values for $r$ and $R$ in the range $0<r<1$ and
$0<R<S(\psi^{A})$ such that 
\begin{equation}
\frac{\left\lfloor qn\right\rfloor S(\phi^{A})}{\left\lfloor r\left\lfloor Rn\right\rfloor \right\rfloor }\leq1
\end{equation}
for all $n$ large enough. For any such choice of $q$, $r$ and $R$
we see that the following holds for all $n$ large enough:
\begin{equation}
E_{\mathrm{r}}\left(\Lambda_{n}^{\psi,\phi}[\rho_{n}]\right)\leq E_{\mathrm{r}}(\rho_{n})+\log_{2}\left(1+\frac{1}{2^{\left\lfloor Rn\right\rfloor (1-r)}}\right)+\left\lfloor qn\right\rfloor S(\phi^{A})Z.
\end{equation}

In the final part of the proof we will analyze closer the term $\left\lfloor qn\right\rfloor S(\phi^{A})Z$.
From Eq.~(\ref{eq:Z}) we see that $Z$ can be written in the form
\begin{align}
Z & =W\left[\frac{1}{4}\left(W^{3}-4VW^{2}+4V^{2}W\right)-XW+2XV\right].
\end{align}
Note that each of the terms $V$, $W$, and $X$ is bounded for all
$n$, and moreover $\lim_{n\rightarrow\infty}nW=0$ for all $\alpha>0$,
which implies that $\lim_{n\rightarrow\infty}nZ=0$. From this it
is clear that the term $\left\lfloor qn\right\rfloor S(\phi^{A})Z$
can be made arbitrarily small by choosing large enough $n$. For any
$r<1$, $R>0$ it is further clear that 
\begin{equation}
\lim_{n\rightarrow\infty}\log_{2}\left(1+\frac{1}{2^{\left\lfloor Rn\right\rfloor (1-r)}}\right)=0,
\end{equation}
and the proof is complete.
\end{proof}

\section{Conclusions}

To conclude, we have established a framework for multipartite entanglement
theory that presents a comprehensive solution for asymptotic transformation
rates across all multipartite pure states. Our approach uniquely incorporates
a subtle relaxation to the LOCC paradigm, permitting all transformations
on a multipartite system that can increase the bipartite relative
entropy of entanglement of any state by $\varepsilon$, requiring
that $\varepsilon$ can be made arbitrarily small. The primary finding
of our research is that transformation rates for all multipartite
pure states are fundamentally determined by the bipartite entanglement
entropies of the involved quantum states. These results underscore
the centrality of entanglement entropy in governing quantum state
transitions. 

In the context of a tripartite system, our methodology bridges a crucial
gap between tripartite and bipartite entanglement theory. Although
transformations of multipartite pure states are typically irreversible,
even in our setup, we demonstrate that reversibility can be achieved
in certain meaningful scenarios. This extends to the reversible conversion
between GHZ and $W$ states in a three-qubit setting, as well as the
reversible conversion between a pair of GHZ states and three singlets,
each singlet being shared by a different pair of parties. Additionally,
we establish that in the framework proposed herein, singlets can act
as a reversible entanglement generating set for all tripartite pure
states.

It is worth noting that the results presented in this article conclusively
address an open question posed in~\citep{PhysRevA.63.012307} over
two decades ago. The authors of~\citep{PhysRevA.63.012307} asked
whether there exists a notion of asymptotic state transformations
that would enable reversible interconversion among all multipartite
pure states which possess identical entanglement entropies across
all bipartitions. Our work provides a positive response to this question,
significantly advancing our understanding of entanglement theory in
the multipartite setting.

The results described in this article lead to several intriguing questions
and promising trajectories for further investigation. One such line
of research pertains to the role of catalysis within the AEN framework.
Recent studies indicate that within the LOCC paradigm, asymptotic
transformations bear a close relationship with single-copy transformations
that involve catalysis~\citep{PhysRevLett.127.150503,PhysRevLett.127.080502,PhysRevLett.129.120506,ganardi2023catalytic,lami2023catalysis,datta2022catalysis}.
Currently, it remains an open question whether this relationship extends
to the AEN setting. 

This work was supported by the National Science Centre Poland (Grant No. 2022/46/E/ST2/00115).

\emph{Note added:} After the completion of this manuscript, two independent
works~\citep{hayashi2024,lami2024} have presented a proof of the
generalized quantum Stein's lemma. These works also establish that
the resource theory of entanglement becomes fully reversible in the
bipartite setting when LOCC is extended to asymptotically non-entangling
operations~\citep{Brandao2008,Brandao2010,Brandao2010b}. Given that
asymptotically non-entangling operations form a subset of AEN, this
result further demonstrates that AEN operations also lead to a fully
reversible entanglement theory in the bipartite case.

\appendix
\bibliography{literature}

\end{document}